\newcommand{\speed}{\textsf{c}}
\newcommand{\maxdist}{\delta}
\newcommand{\secparam}{\lambda}
\newcommand{\timethreshold}{\Delta}
\newtheorem{theorem}{Theorem}[section]
\newtheorem{lemma}[theorem]{Lemma}
\newcommand{\gen}{\textsf{Gen}}
\newcommand{\readf}{\textsf{Read}}
\newcommand{\chk}{\textsf{CHK}}
\newcommand{\soteria}{\textsf{Soteria}}
\newtheorem{definition}{Definition}
\begin{document}
%
\title{Approaches to Quantum Remote Memory Attestation}

\author{
    Jesse Laeuchli,
    Rolando Trujillo Rasua\\
    \thanks{Jesse Laeuchli is with the School of Computer Science and Engineering, UNSW Sydney, Sydney, Australia (e-mail: j.laeuchli@unsw.edu.au). ORCID: \href{https://orcid.org/0000-0001-9970-9105}{0000-0001-9970-9105}.}
    \thanks{Rolando Trujillo Rasua is with the Dep. Enginy. Informàtica i Matemàtiques, Rovira i Virgili University, Tarragona, Spain (e-mail: rolando.trujillo@urv.cat). ORCID: \href{https://orcid.org/0000-0002-8714-4626}{0000-0002-8714-4626}.}
}


%


\maketitle
\pagestyle{plain}

\begin{abstract}
    In this article we uncover flaws and pitfalls of a quantum-based remote memory attestation procedure for Internet-of-Things devices. We also show  limitations of quantum memory that suggests 
    the attestation problem for quantum memory is fundamentally different to the attestation problem for classical memory, even when the devices can perform quantum computation. 
    The identified problems are of interest for quantum-based security protocol designers in general, particularly those dealing with corrupt devices. 
    Finally, we make use of the lessons learned to design a quantum-based attestation system for classical memory with improved communication efficiency and security. 
\end{abstract}

\begin{IEEEkeywords}
Internet of Things (IoT) security, quantum computing (QC),
remote attestation protocol.
\end{IEEEkeywords}

%
\IEEEpeerreviewmaketitle

\section{Introduction}
\label{sec-introduction}

Determining whether a remote embedded device is in a secure state is an important problem in computer security. A solution to this problem is \emph{remote memory attestation}: a two-party communication protocol whereby a device is asked to prove that its memory is in a given state. If the device successfully passes the protocol, then the analyst can be confident that it has not been compromised. 
 
Developing secure protocols for remote attestation is inherently difficult due to the assumption that the proving device is compromised. This assumption breaks conventional security protocols, which require the communicating parties to either share a cryptographic secret, follow the protocol specification, or securely connect to a trusted-third-party.
Because the remote device is potentially infected with malware, 
the state-of-the-practice in remote memory attestation is to deploy  
secure hardware on the proving device \cite{SARA,RESERVE,ROMAN2023,Conti2019} in such a way that at least one of the three above requirements is satisfied.

Trusted hardware, however, increases the manufacturing cost, which typically cannot be afforded by resource-constrained IoT systems \cite{5504802}, and depends on engineering and supply chain guarantees rather than on information-theoretic security. If a security vulnerability is later found with the installed hardware, then all deployed devices must be recalled. Unlike software vulnerabilities, hardware vulnerabilities cannot be hot fixed.

Orthogonal to hardware-based approaches are those \cite{Francillon2014,Cao2023,swatt,pioneer,yan2011,6234416,Armknecht2013} that rely on side-channel information to authenticate the prover and to ensure that prover and verifier run the protocol in a clean environment, i.e. without the interference of a network attacker. These approaches are called \emph{software-based}. 
In between hardware and software-based approach lies the \emph{hybrid} approach, whose goal is either to emulate secure hardware or minimizes its use \cite{Schulz2017,SIMPLE,Carpent2018,FENG2018167,Schulz2011,6881436}. 

 
\noindent \emph{Problem statement.}
In classical computing, it is still an open problem how to achieve remote memory attestation in the presence of a network attacker without assuming trusted hardware on the remote device \cite{Francillon2014,Trujillo-Rasua2019,Armknecht2013}. That is to say, existing software-based attestation protocols whose participants are modeled as Turing machines do not 
satisfy (cryptographic) authentication nor resist  \emph{proxy attacks}: an attack whereby the prover outsources the attestation task to a third-device. 
That is an important 
limitation, which forces the verifier to run the protocol in a clean 
environment aided by the use of out-of-band channels, such as visual 
inspection. 
In quantum computing, however, the problem has been claimed to be successfully addressed by the remote memory attestation protocol \soteria{} \cite{Soteria}, which does assume a Quantum Physically Uncloneable Function (QPUF) on the remote device, but do not require trusted hardware to prevent malware from interfering with the attestation routine. 

\noindent 
\emph{Contributions.} 
This article makes three contributions. 
\begin{enumerate}
    \item First, we show that \soteria{} does not achieve its intended security goals, thereby leaving open the problem of remote attestation without secure hardware in the presence of a network attacker, even within the realm of quantum computing. 
    The flaws and pitfalls we report on \soteria{} are non-trivial, some of them arising from counter-intuitive effects of quantum mechanics. Therefore, we believe they are useful for a more general audience, including, but not limited to, the community working on quantum-based security protocols. 
    
    \item We prove an impossibility result on the use hashing for quantum memory, which is a standard practice in the classical memory attestation problem. In addition, we show a counterintuitive result on the attestation of quantum memory 
    that forces any protocol (where the prover sends its memory content back to verifier) to use a number of messages that quadratically increases with the inverse of the size of the malware. Those two proofs indicate that approaches to the attestation of quantum memory might need to be fundamentally different to existing approaches for classical memory.
    
    \item On a more positive note, we show that quantum computing do offer features that have no counterpart in classical computing, and that can enhance the security of remote attestation protocols. One of those features is \emph{teleportation}, which is used in \cite{LaeuchliT24} to prevent a remote device from trivially outsourcing the attestation routing onto a colluding (more powerful) device. The other feature is \emph{superdense coding}, which we show in this article can be used by the verifier to agree with the prover on a random quantum circuit at runtime. This leads to an improvement over the design introduced in \cite{LaeuchliT24} in terms of resistance to proxy attacks. 
    More precisely, we introduce a protocol that is secure against an attacker that colludes with the prover from a sufficiently long distance. To the best our knowledge, this is the highest level of security remote attestation protocols have achieved so far, in both classical and quantum computing, without resorting into trusted hardware.    
\end{enumerate}

\noindent 
\emph{Organization.} 
In Section \ref{sec-preliminaries}, we review the fundamentals of Quantum Information Science (QIS) necessary to understand out theoretical results. In Section \ref{sec-soteria}, we discuss how \cite{Soteria} approached the problem of remote memory attestation. Since this design illustrates many of the issues that may derail proposed Quantum implementations of memory attestation, we note the correctness and security flaws of this method here, proving several theorems that will impact any potential approach. 
Limitations on the attestation of quantum memory that go beyond \soteria{} are reported in 
Section \ref{sec-quantum-memory}. 
Section \ref{sec-superdense} discuss the proposal of \cite{LaeuchliT24}, and taking into account some of the limitations and flaws reported in previous sections, we design a new system which improves on both approaches. We note that a brief review of the literature on remote memory attestation is provided in Section \ref{sec-introduction}, while details on the two existing quantum-based memory attestation protocols are given elsewhere in the paper.




\section{Preliminaries}
	\label{sec-preliminaries}
	Here we introduce the notation and tools we intend to make use throughout the article. We begin by reviewing the basics of quantum states and quantum 
	teleportation. For a complete introduction to these ideas, the reader is 
	recommended to consult \cite{teleportationiscool}. 
	
	\subsection{Quantum Notation}
	
    We represent qubits in the standard Bra-ket notation, being $\ket{0}$ and $\ket{1}$ the \emph{computational basis states} of a qubit. Unlike bits in classical computing, which are either $0$ or $1$, a qubit can be in superposition of its basis states, denoted by,
	
	\begin{equation}
		\ket{ \psi} = \alpha_0 \ket{0} + \alpha_1 \ket{1}
	\end{equation}
	
    In the notation above, $\psi$ is the \emph{name} of the qubit, while $\alpha_0$ and $\alpha_1$ are complex numbers satisfying $| \alpha_0 |^2+| \alpha_1 
	|^2 = 1$. 
    Because the values 
	$\alpha_0$ and $\alpha_1$ are complex, a qubit lies on a unit sphere known as 
	the Bloch Sphere. Considering $(\alpha_0,\alpha_1)$ a vector on the 
	Bloch sphere with polar angle $\theta$, and azimuthal angle $\phi$, we can 
	represent any qubit as $cos(\frac{\theta}{2}) \ket{1} + 
	sin(\frac{\theta}{2}) e^{i\phi}\ket{0}$.

    We define an $n$-qubit state by,
	
	\begin{equation}
		\ket{\psi} = \sum_{i=0}^{2^n-1} \alpha_i \ket{i} 
	\end{equation}
	
	Where $\alpha_0, \ldots, \alpha_{2^n-1} \in \mathbb{C}$, $\sum^{2^n-1}_{i=0} | \alpha_i |^2 = 
	1$, and, for every $i \in \{0, \ldots, 2^n-1\}$, $\ket{i}$ denotes the combination of the basis states $\ket{b_0} \cdots \ket{b_{n-1}} = \ket{b_0 \cdots b_{n-1}}$ where $b_0 \cdots b_{n-1}$ is $i$'s binary representation. 
    According to the Born rule, the probability of obtaining any basis state $\ket{i}$ on measurement on the computational basis
	is $|\alpha_i|^2$. That is, measuring a quantum state in superposition will make it collapse into one of its basis states. 
    
    At this point it is worth introducing the \emph{Bell basis}, which we will use later to describe quantum teleportation. 
    The Bell basis for 2-qubit systems is represented, in terms of the computational basis, in the following way.   

    \begin{align*}
		\ket{ \Phi^+} &= \frac{1}{\sqrt{2}} \ket{00} + 
		\frac{1}{\sqrt{2}} \ket{11} \\
		\ket{ \Phi^-} &= \frac{1}{\sqrt{2}} \ket{00} - 
		\frac{1}{\sqrt{2}} \ket{11} \\
		\ket{ \Psi^+} &= \frac{1}{\sqrt{2}} \ket{01} + 
		\frac{1}{\sqrt{2}} \ket{10} \\
		\ket{ \Psi^-} &= \frac{1}{\sqrt{2}} \ket{01} - 
		\frac{1}{\sqrt{2}} \ket{10} 
	\end{align*}

    We can represent a 2-qubit state $\ket{\psi}$ in the Bell basis by  writing $\alpha_0\ket{ \Phi^+} + \alpha_1\ket{ \Phi^-} + \alpha_2\ket{ \Psi^+} + \alpha_3\ket{ \Psi^-}$. And, when measuring $\ket{\psi}$ on the Bell basis, we would obtain the basis state $\ket{ \Phi^+}$ with probability $\alpha_0^2$, and so on. If, instead, we measure $\ket{\psi}$ on the computational basis, we would obtain $\ket{00}$ with probability $\frac{(\alpha_0 + \alpha_1)^2}{2}$, and so on. 
    Again, measuring on a given basis, say the computational basis or the Bell basis, allows us to collapse a quantum state into one of the corresponding basis states. 
    
    \subsection{Entanglement}

    We say a state $\ket{\psi}$ is \emph{separable} if it can be written as the tensor 
	product of other states, namely $\ket{\psi} = \ket{\psi_1} \otimes \ket{\psi_2}$.
	States which are not separable are \emph{entangled}. As an example, $\alpha_1 \alpha_2\ket{01}$ is separable because it can be written as $\alpha_1 \ket{0} \otimes \alpha_2 \ket{1}$, whereas all Bell basis states are entangled. 
	
    A key property of entangled two-qubit states
    is that, if we provide one half of the entangled pair 
	to one party, and the other half to another, if either party measures 
	$\ket{0}$ from their half of the pair, the other party will also measure 
	$\ket{0}$, and similarly for $\ket{1}$. 
    For example, 
    if we have a two-qubit state $\frac{1}{\sqrt{2}} \ket{0}_a\ket{0}_b + \frac{1}{\sqrt{2}} \ket{1}_a\ket{1}_b$, where $\ket{0}_a$ (resp. $\ket{1}_b$) denotes a qubit held by a quantum-enabled computational device $a$ (resp. $b$), then when $a$ measures on the computational basis and obtains $\ket{0}$, which occurs with probability one-half, so does $b$, and vice-versa. 
    This property of entanglement is 
	the basis for the power of quantum computing in general. 
 
    Notice how in the previous paragraph we wrote $\frac{1}{\sqrt{2}} \ket{0}_a\ket{0}_b + \frac{1}{\sqrt{2}} \ket{1}_a\ket{1}_b$ instead of $\frac{1}{\sqrt{2}} \ket{00} + \frac{1}{\sqrt{2}} \ket{11}$. Those two-qubit quantum states are equivalent in terms of their behavior; the subscripts are merely used to make it explicit that 
    $\ket{0}_a$ and $\ket{0}_b$ are two different particles, possibly held by two different devices. 


	\subsection{Quantum Teleportation}
	
    Quantum Teleportation is a method for 
	teleporting a qubit from one party to another. The state to be teleported 
	is transmitted instantly over any distance, but cannot be extracted without 
	two classics bits, which can of course be transmitted no faster than the 
	speed of light.
	
    Suppose Alice wishes to transmit a qubit $\ket{\psi}_{c} = 
	\alpha_0 \ket{0}_{c} + \alpha_1 \ket{1}_{c}$ to Bob. Here, the subscript $c$ is used to distinguish the particle to be transmitted from two other particles necessary for teleportation.
    Those two other particles are assumed to be entangled in any of the Bell basis states, say $\ket{ \Phi^+}_{ab} = \frac{1}{\sqrt{2}} \ket{0}_a \ket{0}_b + 
    \frac{1}{\sqrt{2}} \ket{1}_a \ket{1}_b$, where the subscript $a$ (resp. $b$) denotes a particle $a$ (resp. $b$) held 
    by Alice (resp. Bob).  
    The three-qubit system state formed by  $\ket{\psi}_{c}$ and $\ket{\Phi^+}_{ab}$ is thus, 
	
	\begin{align*}
		& \ket{\psi}_c \otimes \ket{\Phi^+}_{ab} = \\ 
        & \quad (\alpha_1 \ket{0}_c + \alpha_2 \ket{1}_c) \otimes 
		(\frac{1}{\sqrt{2}} \ket{0}_a \ket{0}_b + \frac{1}{\sqrt{2}} \ket{1}_a 
		\ket{1}_b ) \\
        & \quad =  \frac{\alpha_1}{\sqrt{2}} \ket{0}_c \ket{0}_a \ket{0}_b + \frac{\alpha_1}{\sqrt{2}} \ket{0}_c \ket{1}_a \ket{1}_b + \\
        & \quad \quad \frac{\alpha_2}{\sqrt{2}} \ket{1}_c \ket{0}_a \ket{0}_b + \frac{\alpha_2}{\sqrt{2}} \ket{1}_c \ket{1}_a \ket{1}_b\\
        & \quad = \frac{\alpha_1}{\sqrt{2}} \ket{00}_{ca} \ket{0}_b + \frac{\alpha_1}{\sqrt{2}} \ket{01}_{ca} \ket{1}_b + \\ 
        & \quad \quad \frac{\alpha_2}{\sqrt{2}} \ket{10}_{ca} \ket{0}_b + \frac{\alpha_2}{\sqrt{2}} \ket{11}_{ca} \ket{1}_b
	\end{align*}
	
    Notice that $\ket{\psi}_c \otimes \ket{\Phi^+}_{ab}$ is written above in the computational basis. The goal next is to rewrite it in the Bell basis. We can do so by using the following identities. 
	
	\begin{align*}
		\ket{00}_{ca} &= \frac{1}{\sqrt{2}}( \ket{\Phi^+}_{ca}+\ket{\Phi^{-1}}_{ca}) \\
		\ket{01}_{ca} &= \frac{1}{\sqrt{2}}( \ket{\Psi^+}_{ca}+\ket{\Psi^{-1}}_{ca}) \\
		\ket{10}_{ca} &= \frac{1}{\sqrt{2}}( \ket{\Psi^+}_{ca}-\ket{\Psi^{-1}}_{ca}) \\
		\ket{11}_{ca} &= \frac{1}{\sqrt{2}}( \ket{\Phi^+}_{ca}-\ket{\Phi^{-1}}_{ca}) \\
	\end{align*}
	
	The total state of the system can thus be rewritten as follows. 
 
	\begin{align*}
		\ket{\psi}_c \otimes \ket{\Phi^+}_{ab} = 
		\frac{1}{\sqrt{2}} \left( \right. & \ket{\Phi^+}_{ca} \otimes(\alpha_1 
		\ket{0}_b+\alpha_2\ket{1}_b)+ \\
		&\ket{\Phi^-}_{ca} \otimes(\alpha_1 \ket{0}_b-\alpha_2\ket{1}_b)+\\
		&\ket{\Psi^+}_{ca} \otimes(\alpha_1 \ket{1}_b+\alpha_2\ket{0}_b)+\\
		&\left. \ket{\Psi^-}_{ca} \otimes(\alpha_1 \ket{1}_b-\alpha_2\ket{0}_b) \right)\\
	\end{align*}
	
    Now Alice can measure her two qubits $a$ and $c$ in the Bell basis. 
    This leads, with equal probability, the system into one of the following states.
	
	\begin{align*}
		&\ket{\Phi^+}_{ca} \otimes(\alpha_1 \ket{0}_b+\alpha_2\ket{1}_b) \\
		&\ket{\Phi^-}_{ca} \otimes(\alpha_1 \ket{0}_b-\alpha_2\ket{1}_b)\\
		&\ket{\Psi^+}_{ca} \otimes(\alpha_1 \ket{1}_b+\alpha_2\ket{0}_b)\\
		&\ket{\Psi^-}_{ca} \otimes(\alpha_1 \ket{1}_b-\alpha_2\ket{0}_b)\\
		\label{equ:4states}
	\end{align*}

	Alice then inform Bob which of the four states the total system is in, 
	based on the result of her measurement by transmitting 2 classical bits. 
    If 
	the system is in the first state, Bob does nothing, otherwise he performs a 
	rotation on his qubit to return it to the form $\alpha_1 
	\ket{0}_b+\alpha_2\ket{1}_b$. Thus, as long as Alice can transmit two 
	classical bits and share a Bell basis state, she can teleport an arbitrary quantum state to Bob. 


    \subsection{Superdense Coding}

    While quantum teleportation allows us to transmit a qubit using two bits of classical information, \emph{superdense coding} allows us to transmit two classical bits of information by transmitting one pair of an entangled qubit \cite{superdense}. If the adversary obtains either of the entangled qubits separately, this provides no information about which classical pair of bits is being transmitted. The recipient, and thereby any man-in-the-middle adversary, must have both qubits in hand in order to obtain the classical information.

    Like in teleportation, the protocol starts with Alice and Bob sharing a Bell state, say 
    $ \ket{ \Phi^+}_{ab} = \frac{1}{\sqrt{2}} \ket{0}_a \ket{0}_b + 
		\frac{1}{\sqrt{2}} \ket{1}_a \ket{1}_b$.
    Once Alice decides on the two bits she wishes to send to Bob, denoted $b_0b_1$, she rotates her qubit in such a way that the state $\ket{ \Phi^+}_{ab}$ is transformed into the $i$th Bell basis state where $i$ is the integer whose binary represeantion is $b_0b_1$. Notably, if $b_0b_1 = 00$, no rotation is performed and $\ket{ \Phi^+}_{ab}$ remains as initially shared.    
 After performing the rotation (if necessary), Alice sends her qubit to Bob. Upon reception of Alice's qubit, Bob will apply the Controlled NOT gate, with Alice's qubit as the control qubit, and his original qubit as the target qubit. Afterwards, he will apply the Hadamard gate on Alice's qubit. On the completion of this process he will measure either $\ket{00}, \ket{01}, \ket{10}$, or $\ket{11}$, depending on which Bell state Alice created before sending her qubit to Bob. 
 Alice has thus transferred two bits of classical information, by transmitting one qubit.

    \subsection{Amplitude encoding and quantum tomography}

    Amplitude enconding is a mechanism to represent classical information within a quantum state. Given a vector $X = (x_1, \ldots, x_n)$ of reals, amplitude enconding 
    codifies $X$ in a quantum state $\ket{\psi_X}$ as follows.  

    \begin{align*}
        \frac{1}{N} \left[ x_1\ket{0} + \cdots + x_{n} \ket{n-1}\right] 
    \end{align*}

    Where $N = \sqrt{x_1^2 + \cdots + x_{n}^2}$ is called the \emph{normalizing factor} of the enconding, which is necessary to ensure that the sum of squares of the amplitudes of the quantum state equals one.  

    The decoding process, that of extracting $X$ out of $\ket{\psi_X}$, is known by \emph{quantum tomography} \cite{tomography}. Quantum tomography measures a series of the same qubit over and over to attempt to extract the value of the qubit. 
    Due to the probabilistic nature of measurement, the convergence rate of quantum tomography for an unknown qubit requires $O(\frac{n}{\epsilon^2})$ number of samples, 
    where $\epsilon$ is the desired error. 


   

    \section{Flaws and pitfalls on \soteria}
    \label{sec-soteria}
    This section examines \soteria{} and reports on its flaws and pitfalls. We shall introduce the two versions of \soteria, one addressing classical memory and the other one quantum memory. The analysis here focuses on the classical version, though, as the analysis of the quantum version is subsumed by the more general analysis we give in the next section.  
    




    \subsection{Specification of \soteria{}}

    The \soteria{} protocol aims at allowing a verifier to attest the memory of an IoT device, called a prover. 
    It does so by assuming that both, prover and verifier, can perform quantum computation. In particular, the prover is manufactured with an \emph{ideal} Quantum Physically Uncloneable Function (QPUF). 
    
    A QPUF has domain and range the universe of quantum states. A QPUF is termed \emph{ideal} if we assume that, for every input quantum state $\ket{\psi}$, its output cannot be predicted unless the QPUF has been previously called on $\ket{\psi}$. In other words, an ideal QPUF behaves like an ideal hash function (modelled as a random oracle) where each possible query is mapped to a (fixed) random response from its output domain.

    The last piece of the prover's architecture necessary to understand \soteria{} is its memory, which is split into $m$ words of memory\footnote{The original article assumes words of memory to be grouped into blocks. Because they play no crucial role in the security or efficiency of the protocol, we have ommited the notion of blocks. }. 
    Notably, \soteria{} can be adapted to work on both quantum and classical memory. In the classical memory setting, an index is an integer in $\{0, \ldots, m-1\}$ that  refers to the position of the memory word, and a word is a binary string of fixed length, typically 16 or 32 bits.   
    In the quantum memory setting, an index is a quantum state $\alpha_0 \ket{0} + \cdots \alpha_{m-1} \ket{m-1}$ (possibly) in superposition, and so is a word. Let $\{\ket{\psi_0}, \ldots, \ket{\psi_{m-1}}\}$ be the ordered set of words in the quantum memory.   
    Then, given an index  $\alpha_0 \ket{0} + \cdots \alpha_{m-1} \ket{m-1}$, the memory content is calculated by $\alpha_0 \ket{\psi_0} + \cdots \alpha_{m-1} \ket{\psi_{m-1}}$. In particular, if $\alpha_i = 1$, then the index gives exactly the word $\ket{\psi_i}$ rather than the superposition of many words.

    The \soteria{} protocol splits into the following four phases. 

    \noindent \emph{Setup phase.} 
    During setup, 
    the verifer generates a random bitstring $c$ of length $\kappa$, where $\kappa$ is a security parameter, and uses amplitude encoding to create a  
    quantum state $\ket{\psi_c}$ (also of length $\kappa$).     
    The verifier then queries the prover's QPUF on input $\ket{\psi_c}$ to obtain $\ket{\psi_r}$.  
    This process is performed for a sufficiently large number of random inputs. Let $\mathcal{I}_P$ denote all pairs $(\ket{\psi_c}, \ket{\psi_r})$ stored during the setup phase between the verifier and a prover $P$. 

    \noindent \emph{Challenge phase.} 
    To attest the memory of a prover $P$, the verifier randomly picks a pair $(\ket{\psi_c}, \ket{\psi_r})$ out of $\mathcal{I}_P$ which shall determine the words of memory to be attested (see Figure \ref{fig-soteria-quantum} and \ref{fig-soteria-classical}). 
    In the classical memory setting, the verifier also    
    generates a nonce $R_w$\footnote{The original article uses an additional nonce for memory blocks, which isn't necessary here.}. 
    The challenge $\ket{\psi_c}$ is sent to the prover over a quantum channel, while 
    $R_w$ is sent (if necessary) over a classical channel. 
    
    \noindent \emph{Response phase.} The prover calls its QPUF on input $\ket{\psi_c}$ to obtain the corresponding output $\ket{\psi_r}$, as previously agreed with the verifier. The next step depends on the memory setting we are in.
    
    \begin{itemize}
        \item In the quantum memory setting, $\ket{\psi_r}$ is used directly as a memory index. Let $\ket{\psi_{\sigma}}$ be the quantum state corresponding to the memory indexed at $\ket{\psi_r}$. 
        \item In the classical memory setting, the prover extracts a bitstring $r$ of length $\kappa$ from $\ket{\psi_r}$. 
        \soteria{} then follows the approach \cite{Hatt} to compute a checksum $\sigma$ of the memory by using $r \oplus R_w$ as seed of a pseudo-random function to randomly sample memory indexes.         
 Lastly, the prover uses amplitude enconding to encode $\sigma$ into a quantum state $\ket{\psi_{\sigma}}$, which is sent to the verifier. 
    \end{itemize}

    \noindent \emph{Verification phase.}
    Upon reception of the quantum state $\ket{\psi_{\sigma}}$, the verifier goes through the same steps as the prover to obtain its own attestation value of the provers memory based on the challenge $\ket{\psi_c}$. 
    Note that the verifier has all the information to calculate a state equal to $\ket{\psi_{\sigma}}$, including the content of the prover's memory and the challenge-reponse pairs obtained from the QPUF. 
    Let $\ket{\psi_{\sigma_P}}$ be the quantum state obtained by the verifier.
    The verifier then proceeds to check whether $\ket{\psi_{\sigma}} = \ket{\psi_{\sigma_P}}$. 
    If the verifier cannot obtain the same quantum state sent by the prover, then the verifier halts the protocol and raises an alarm. 
    
    \begin{figure}
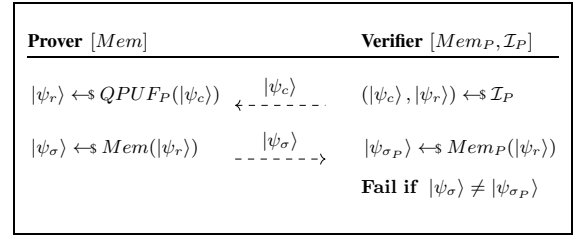

		\begin{center}
  \scalebox{0.8}{
			\fbox{
				\pseudocode[colspace=0cm]{%
					\< \< 
					\\
					\textbf{Prover $[Mem] $} \< \< 
					\textbf{Verifier 
						$[Mem_P, \mathcal{I}_P]$}  
					\\[0.1\baselineskip][\hline]
					\<\< \\[-0.5\baselineskip]
					\ket{\psi_r} \sample QPUF_P(\ket{\psi_c}) \< \sendmessageleft*[1.5cm, style ={dashed}]{\ket{\psi_c}}  \< (\ket{\psi_c}, \ket{\psi_r}) \sample \mathcal{I}_P\\
					\ket{\psi_{\sigma}} \sample Mem(\ket{\psi_r}) \<		\sendmessageright*[1.5cm, style={dashed}]{\ket{\psi_{\sigma}}} \<
                    \ket{\psi_{\sigma_P}} \sample Mem_P(\ket{\psi_r}) 
                    \\
					\<\< \highlightkeyword{Fail} \highlightkeyword{if} 
                    \ket{\psi_{\sigma}} \neq \ket{\psi_{\sigma_P}}
					\\
				} 
			}
   }
			\caption{\soteria{} for quantum memory where $Mem$ denotes the memory content of the prover and $Mem(\ket{\psi_r})$ denotes the memory value at index $\ket{\psi_r}$.  
				\label{fig-soteria-quantum}}	
		\end{center}
	\end{figure}

    \begin{figure}
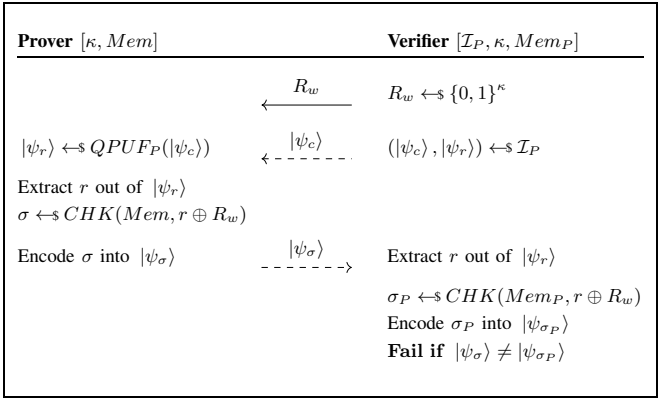

		\begin{center}
  \scalebox{0.8}{
			\fbox{
				\pseudocode[colspace=0cm]{%
					\< \< 
					\\
					\textbf{Prover $[\kappa, Mem]$} \< \< 
					\textbf{Verifier 
						$[\mathcal{I}_P, \kappa, Mem_P]$}  
					\\[0.1\baselineskip][\hline]
					\<\< \\[-0.5\baselineskip]
					\< \sendmessageleft*[1.5cm]{R_w}  \< R_w \sample \bin^{\kappa} \\
					\ket{\psi_r} \sample QPUF_P(\ket{\psi_c}) \< \sendmessageleft*[1.5cm, style ={dashed}]{\ket{\psi_c}}   \< (\ket{\psi_c}, \ket{\psi_r}) \sample \mathcal{I}_P\\
					\text{Extract } r \text{ out of } \ket{\psi_r}\< \<  
					\\
					\sigma \sample CHK(Mem, r \oplus R_w) \<\<  
					\\
					\text{Encode } \sigma \text{ into } \ket{\psi_{\sigma}} \< 
					\sendmessageright*[1.5cm, style={dashed}]{\ket{\psi_{\sigma}}} \<
					\text{Extract } r \text{ out of } \ket{\psi_r}\\
					\<\<  
					\sigma_P \sample CHK(Mem_P, r \oplus R_w) \\
					 \< 
					\< \text{Encode } \sigma_P \text{ into } \ket{\psi_{\sigma_P}} 
					\\
					\<\< \highlightkeyword{Fail} \highlightkeyword{if} 
                    \ket{\psi_{\sigma}} \neq \ket{\psi_{\sigma_P}}
					\\
				} 
			}
   }
			\caption{\soteria{} for classical memory where $CHK(Mem, r \oplus R_w)$ is a (possibly partial) checksum on the prover's memory $Mem$ with seed $r \oplus R_w$. Solid and solid arrows are used to denote classical and quantum communication, respectively. 
				\label{fig-soteria-classical}}	
		\end{center}
	\end{figure}

   \subsection{Correctness analysis}


   Correctness in memory attestation means that, if the prover has not been corrupted, then it should pass the protocol with a  verifier. 
   In the case of \soteria, and that of most memory attestation protocols, the prover can be modelled as a probabilistic algorithm $P$ with domain the universe of challenges $\mathcal{C}$ and range the universe of responses $\mathcal{R}$. 
   The verifier, instead, is a probabilistic algorithm $V$ that, on input $P, c \in \mathcal{C}, r \in \mathcal{R}$, outputs 
   \emph{pass} if 
   $P(c) = r$, \emph{fail} otherwise. 
   This leads to the following definition of correctness.  

    \begin{definition}[Correctness]
        A memory attestation protocol is \emph{correct} if for every honest prover $P$, honest verifier $V$ and challenge $c \in \mathcal{C}$, the probability of $V(P, c, P(c)) = pass$ is equal to $1$. 
    \end{definition}

    Notice that correctness assume neither noise nor adversarial interference in the protocol, making correctness a minimum requirement for attestation protocols. 
    Correctness is also a necessary condition for a meaningful security analysis, as a protocol that rejects all provers has a $100\%$ detection rate.   

    We show next that either \soteria{}  
    is incorrect or it relies in a $qPUF$ that is predictable hence insecure. 

    \begin{lemma}
        The classical version of \soteria{} either does not satisfy correctness or uses a qPUF implementation that outputs a basis state for every input.  
    \end{lemma}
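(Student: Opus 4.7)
\medskip

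\noindent \textbf{Proof proposal.} The plan is to locate the single step in the classical version of \soteria{} where a quantum state must be converted into a classical bitstring, namely the step ``Extract $r$ out of $\ket{\psi_r}$'' that is executed independently by both the prover and the verifier. I will argue that, for the protocol to satisfy correctness (pass with probability exactly $1$), the two parties must obtain the same $r$ deterministically, and that this is possible only when $\ket{\psi_r}$ is a basis state (in whatever fixed basis the extraction procedure uses).

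The concrete steps I would take are as follows. First, I would fix an honest prover $P$ and verifier $V$, and an arbitrary challenge $\ket{\psi_c}$, and trace the execution. The verifier holds the pair $(\ket{\psi_c}, \ket{\psi_r})$ stored in $\mathcal{I}_P$ during setup, while the prover regenerates $\ket{\psi_r}$ by querying $QPUF_P$ on $\ket{\psi_c}$. Crucially, the only communication from the prover back to the verifier is the encoded checksum $\ket{\psi_{\sigma}}$; the extracted value $r$ itself is never transmitted. Hence, for the final equality check $\ket{\psi_{\sigma}} = \ket{\psi_{\sigma_P}}$ to hold with probability $1$, the two independently extracted bitstrings $r$ and $r_P$ must agree with probability $1$ (otherwise the seeds $r \oplus R_w$ and $r_P \oplus R_w$ differ and the checksums $\sigma$, $\sigma_P$ disagree except on a measure-zero coincidence, which I would bound away from $1$ using the collision probability of $\chk$).

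Second, I would formalise the extraction procedure as a quantum measurement $\mathcal{M}$ with classical outcomes in $\{0,1\}^{\kappa}$, independent of who performs it. Any such $\mathcal{M}$ is described by a POVM, and by the Born rule the outcome on input $\ket{\psi_r}$ is a random variable whose distribution is a point mass if and only if $\ket{\psi_r}$ lies in the eigenspace of some outcome operator, i.e.\ it is a basis state for $\mathcal{M}$. Since the prover and the verifier apply $\mathcal{M}$ to two physically distinct copies of $\ket{\psi_r}$, their outcomes are independent draws from this distribution. Requiring $\Pr[r = r_P] = 1$ therefore forces the distribution to be a point mass, which in turn forces $\ket{\psi_r}$ to be a basis state of $\mathcal{M}$. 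Running this argument over all challenges $\ket{\psi_c}$ in the domain gives the stated dichotomy.

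The main obstacle I anticipate is ruling out exotic ``extraction'' strategies that circumvent the straightforward measurement model: for instance, one might imagine the verifier measuring its stored $\ket{\psi_r}$ in advance and somehow guiding the prover, or the prover performing tomography across many copies. I would handle the first by appealing to the protocol specification, which does not transmit $r$ from verifier to prover, and the second by noting that tomography is inherently approximate and again probabilistic, so it cannot yield exact agreement with probability $1$ on arbitrary states. Together these rule out any non-basis $\ket{\psi_r}$ while preserving correctness, completing the lemma.
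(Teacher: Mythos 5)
Your proposal is correct and follows essentially the same route as the paper's proof: both reduce correctness to the requirement that the prover's and the verifier's independent extractions from (distinct physical copies of) $\ket{\psi_r}$ agree with probability $1$, which forces $\ket{\psi_r}$ to be a basis state --- the paper makes this concrete by computing the collision probability of two independent computational-basis measurements as $\sum_i \alpha_i^4$ and observing it equals $1$ only when some $\alpha_i = 1$. Your POVM phrasing is marginally more general and you additionally spell out the passage from agreement of $r$ to agreement of the checksums, but the core argument is identical.
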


    \newcommand{\innerproduct}[2]{\left\langle #1, #2 \right\rangle}

    \begin{proof}

        In \soteria, the prover receives a single challange $\ket{\psi_c}$ from the verifier. 
        When the prover calls 
        $\ket{\psi_r} = qPUF(\ket{\psi_c})$, the quantum state 
        $\ket{\psi_c}$ collapses, meaning that $r$ should be extracted out of a single copy of $\ket{\psi_r}$. 
        
        Suppose 
        $\ket{\psi_r} = \alpha_0\ket{0} + \cdots + \alpha_{\kappa}\ket{\kappa}$, where $\kappa$ is the security parameter. 
        Now, after measuring $\ket{\psi_r}$, the prover will obtain $\ket{i}$ with probability $\alpha_i^2$. Likewise, the verifier will obtain $\ket{j}$ with probability $\alpha_j^2$. 
        This means that the probability of the prover and verifier obtaining the same basis state is $\sum_{i = 0}^{i = \kappa} \alpha_i^4$. Such a probability is equal to $1$ only when $\alpha_i = 1$ for some $i \in \{0, \ldots, \kappa\}$. 
        Therefore, either the qPUF outputs a predictable 
        basis state $\ket{i}$, which contradicts the security properties of qPUFs, 
        or the 
        protocol does not satisfy correctness. 
    \end{proof}

    \subsection{Analyzing potential fixes for correctness}

    Based on the proof above, naively it appears that \soteria{} can be fixed by letting the 
    verifier sends many copies of the challenge state.
    We show however, that the classical version of \soteria{} requires at least a quadratic number of messages in terms of the security parameter $\kappa$, rendering the protocol unviable for reasonable security values. 

    \begin{theorem}
        Consider a version of \soteria{} where the classical prover is able to correctly decode the bitstring $r$. That version requires $\mathcal{O}(\kappa^2)$ copies of the challenge $\ket{\psi_c}$. 
    \end{theorem}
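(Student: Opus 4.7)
The plan is to reduce the counting of challenge copies to the sample complexity of quantum tomography, which by the rate quoted in the preliminaries scales as $\Theta(n/\epsilon^2)$. Since each invocation of the QPUF consumes its input, every additional copy of $\ket{\psi_r}$ the prover obtains costs exactly one additional copy of $\ket{\psi_c}$ transmitted by the verifier. The question therefore reduces to how many tomographic samples of $\ket{\psi_r}$ the prover needs in order to recover, with high probability, the $\kappa$-bit string $r$ that was amplitude-encoded into it.

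Next I would fix the two parameters $n$ and $\epsilon$ for the tomography step. By the amplitude-encoding convention of Section \ref{sec-preliminaries}, a $\kappa$-bit string is embedded in a Hilbert space whose number of amplitudes is $n = \Theta(\kappa)$, so this factor already contributes $\Theta(\kappa)$. For the precision, I would note that after normalization each non-zero amplitude of $\ket{\psi_r}$ has magnitude on the order of $1/\sqrt{\kappa}$, because the squared amplitudes must sum to one while a constant fraction of the $\kappa$ bits are expected to be ones. To correctly decide whether a given coordinate corresponds to a $0$-bit or a $1$-bit, the tomographic estimate must have additive error $\epsilon = O(1/\sqrt{\kappa})$; any larger $\epsilon$ leaves $0$ and $1/\sqrt{\kappa}$ indistinguishable with constant probability and therefore breaks the correctness we are assuming. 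Plugging these into $n/\epsilon^2$ yields $\Theta(\kappa)\cdot\Theta(\kappa) = \Theta(\kappa^2)$ copies.

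The main obstacle is to justify that $\epsilon = \Omega(1/\sqrt{\kappa})$ is genuinely necessary rather than an artifact of a particular coordinate-wise estimator, since in principle joint measurements on many copies could be more sample-efficient than per-qubit ones. I would handle this by reducing to a two-hypothesis distinguishing problem: for any two $\kappa$-bit strings $r, r'$ that differ in a single coordinate, the amplitude-encoded states $\ket{\psi_r}$ and $\ket{\psi_{r'}}$ have trace distance $\Theta(1/\sqrt{\kappa})$, and the standard fidelity-based distinguishability bound shows that any POVM telling them apart with constant advantage requires $\Omega(\kappa)$ copies, regardless of whether the measurement is single-copy or joint. Since correct decoding of $r$ implies distinguishing every such pair, a union bound over the $\kappa$ coordinates then delivers an $\Omega(\kappa^2)$ lower bound, matching the $O(\kappa^2)$ upper bound that tomography already provides. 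The remaining bookkeeping, such as verifying the $\Theta(\kappa)$ Hilbert space dimension and the typical Hamming weight of a random $r$, is routine.
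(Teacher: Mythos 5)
Your main argument is essentially the paper's own proof: tomography of $\ket{\psi_r}$ at precision $\epsilon$ costs $O(\kappa/\epsilon^2)$ samples, the amplitude-encoding normalization forces $\epsilon = O(1/\sqrt{\kappa})$ (the paper phrases this as $N\epsilon < 1/2$ with $N \le \sqrt{\kappa}$, requiring prover and verifier to round to the same bitstring), and no-cloning converts each sample into a fresh copy of $\ket{\psi_c}$; this is exactly the route taken in Section \ref{sec-soteria}. The one substantive addition you make, the matching $\Omega(\kappa^2)$ lower bound, does not go through as written: knowing that distinguishing each single-coordinate pair $\ket{\psi_r}, \ket{\psi_{r'}}$ needs $\Omega(\kappa)$ copies does not let you multiply by $\kappa$ via a union bound, since the same copies serve all $\kappa$ distinguishing tasks simultaneously (a union bound controls failure probability, not sample complexity; a genuine measurement-independent lower bound would need a Fano/Holevo-type packing argument over the $2^{\kappa}$ candidate strings). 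Since the theorem only asserts the $\mathcal{O}(\kappa^2)$ cost of the tomography-based decoding, and the paper likewise does not prove an information-theoretic lower bound, this extra step is unnecessary; I would either drop it or replace it with a correct multi-hypothesis argument.
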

    \begin{proof}
        As mentioned in the preliminaries, the only way  
        to extract the amplitudes from a qubit is the process of quantum tomography, which requires measuring the same qubit over and over to estimate the amplitudes value of the qubit. 
        The question then is how many measurements \soteria{} needs to extract $r$ out of $\ket{\psi_r}$. 

        Suppose we perform quantum tomography on $\ket{\psi_r}$ up to accuracy $\epsilon$, i.e. $\epsilon$ is the maximum distance from the original amplitudes $\{ \alpha_1, \ldots, \alpha_{\kappa}\}$ and the estimate amplitudes $\{\alpha_1', \ldots, \alpha_{\kappa'}\}$, which is to say that $\sum|\alpha_i-\alpha_i'| = \epsilon$.
        The next step is to map the vector of the real values $\{ \alpha_1, \ldots, \alpha_{\kappa}\}$ into a bitstring $r = r_1 \cdots r_{\kappa}$. 
        \soteria{} doesn't give a procedure to make such mapping. However, because they  
        use amplitude encoding for transforming bistring into state amplitudes, it is fair to assume that they divide by the normalization factor to recover the bitstring out of the amplitudes of a quantum state. 
        That is to say, there exists a normalization factor $N$ known by both the prover and the verifier satisfying that $r_i = N\alpha_i $ with $N = \sqrt{r_1^2 + \cdots r_{\kappa}^2}$, which would allow the devices to obtain $r_1, \ldots, r_{\kappa}$ out of the amplitudes of $\ket{\psi_r}$. 

        Now, let $\{\alpha_1^p, \ldots, \alpha_{\kappa}^p\}$ and $\{\alpha_1^v, \ldots, \alpha_{\kappa}^v\}$ be the estimated amplitudes obtained by the prover and the verifier, respectively. Then, the memory indexes calculated by the prover and the verifier are, respectively, $(r_1^p, \ldots, r_{\kappa}^p) = (\lfloor N \alpha_1^p \rceil, \ldots, \lfloor N \alpha_{\kappa}^p \rceil)$ and $(r_1^v, \ldots, r_{\kappa}^v) = (\lfloor N \alpha_1^v \rceil, \ldots, \lfloor N \alpha_{\kappa}^v \rceil)$. 
        Because $r$ is used as a seed of a pseudo-random function, \soteria{} needs $r_1^p \cdots r_{\kappa}^p = r_1^v \ldots r_{\kappa}^v$ to function properly. 
        
        Recall that $\alpha_i^p = \alpha_i + \epsilon_i^p$ for every $i$ and that 
        $\sum |\epsilon_i^p| = \epsilon$. Likewise, considering that the verifier would estimate the amplitudes up to the same accuracy as the prover, we obtain that $\alpha_i^v = \alpha_i + \epsilon_i^v$ with  
        $\sum |\epsilon_i^v| = \epsilon$. So  
        the worst case for ensuring that $r_i^p = r_i^v$ is when 
        all the error is concentrated on the $i$th component for, say the prover, while there is no error on the $i$th component of the verifier. That is, when $\epsilon_i^v = 0$ and $\epsilon_i^p = \epsilon$ or vice versa. In that case,  
        we have $r_i^p = r_i^v \iff \lfloor N\alpha_i \rceil = \lfloor N(\alpha_i + \epsilon) \rceil$, which implies that \soteria{} needs $N\epsilon < 0.5$, otherwise the protocol would incorrectly believe that the prover has been compromised. 

        Once we have established that $\epsilon < \frac{1}{2N}$, the next proof step is to use the   
        result in \cite{tomography}, which states that the convergence rate of quantum tomography for an unknown qubit $\ket{\psi_r}$ 
        requires $\mathcal{O}(\frac{\kappa}{\epsilon^2})$ number of samples, where $\epsilon$ is the desired error.
        That is to say, \soteria{} requires $\mathcal{O}(\kappa N^2)$ measurements of $\ket{\psi_r}$.  
        Due to the no-cloning theorem, we cannot make additional copies of $\ket{\psi_r}$, we must produce new ones. To do this \soteria{} will need new copies of $\ket{\psi_c}$ from the verifier. In other words, the prover in \soteria{}  requires a quadratic number of copies from the verifier.  
        
        We end the proof by noticing that, because $r_i$ is a bit, $N$ is bounded by $\sqrt{\kappa}$, which gives a communication complexity $\mathcal{O}(\kappa^2)$ for the response phase of \soteria.  
    \end{proof}

    We remark that, if there is prior knowledge of the value of the qubit to be decoded, then one can use Bayesian approaches to attempt to speed the convergence of this method. However, the output of the QPUF must be unknown to the prover (and thus also to the attacker). Thus these methods cannot be employed.
    
    \subsection{Security analysis assuming an isolated execution environment}

    A security analysis requires an adversary model. The standard adversary for software-based memory attestation 
    consists of a malicious prover with corrupt memory trying to convince the verifer that its memory content is in a safe state. 
    Notably, in this model, the prover receives no help from a third-device or colluder, known as the \emph{isolation assumption}. 
    
    We model the adversary based on a security experiment $\textsc{Exp}^{\adv}_{P, d, \delta}$ where $d$ is an unespecified distance function on memory content and $\delta$ a distance threshold (see Figure \ref{fig-sec-exp}). 
    In the setup phase, the adversary is given 
    a memory content $Mem_{P'}$ such that $d(Mem_{P'}, Mem_{P}) < \delta$. By giving the adversary a random memory at a $\delta$-distance from the correct memory  $Mem_P$, we are avoiding attacks whereby the adversary reconstructs $Mem_{P}$ from $Mem_{P'}$. 
    In the challenge phase of the experiment, the adversary outputs a malicious prover $P'$ with memory $Mem_{P'}$. 
    The adversary wins the experiment if, given a random challenge $c \in \mathcal{C}$, it holds that 
    $P'(c) = P(c)$. 

\begin{figure}
\procedureblock[linenumbering]{$\textsc{Exp}^{\adv}_{P, d, \delta}$}{
    Mem_{P'} \sample \{x | d(x, Mem_P) < \delta\}\\
    P' \sample \adv(Mem_{P'})\\
    \ket{\psi_c} \sample \mathcal{C}\\
    \pcreturn P'(\ket{\psi_c}) = P(\ket{\psi_c}) 
    }
\caption{The security experiment. \label{fig-sec-exp}}            
\end{figure}

\begin{definition}[$\epsilon$-secure attestation]
A software-based attestation protocol is said to be $\epsilon$-secure if for every prover $P$ the probability of the adversary winning $\textsc{Exp}^{\adv}_{P}$ is lower than $\epsilon$. 
\end{definition}

The definition above ensures that, whenever a prover has been corrupt in an irreversable manner, i.e. in a way that the corrupt prover cannot recover its original state, then an $\epsilon$-secure attestation protocol will detect the corruption with probability at least $1 - \epsilon$. 
We note this is a fairly conservative definition of security. Nonetheless, one that \soteria{} cannot meet when it is assumed correct, as we show next. 

\begin{theorem}
Assuming the destructive version of the SWAP test, \soteria{} cannot detect attacks with probability higher than $\frac{1}{2}$. 
\end{theorem}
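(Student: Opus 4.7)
The plan is to exploit the well-known completeness--soundness gap of the SWAP test. Recall that on input two pure states $\ket{\psi}$ and $\ket{\phi}$, the destructive SWAP test outputs \emph{accept} (i.e. declares them equal) with probability $\frac{1}{2} + \frac{1}{2}|\innerproduct{\psi}{\phi}|^2$. This quantity is bounded below by $\frac{1}{2}$ for any pair of pure states, and that one-sided bias is exactly what the attacker will exploit.

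First I would make explicit the role of the SWAP test in the verification phase of \soteria. The verifier has only a single copy of the state $\ket{\psi_\sigma}$ received from the prover and a single (locally prepared) copy of the reference state $\ket{\psi_{\sigma_P}}$; by the no-cloning theorem no further copies can be produced from the prover's message. Under the theorem's hypothesis the only mechanism for carrying out the check $\ket{\psi_\sigma} \stackrel{?}{=} \ket{\psi_{\sigma_P}}$ is therefore one invocation of the destructive SWAP test.

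Next I would construct a concrete adversary $\adv$ within the security experiment $\textsc{Exp}^{\adv}_{P,d,\delta}$. Given a perturbed memory $Mem_{P'}$ at distance less than $\delta$ from $Mem_P$, the adversary outputs a malicious prover $P'$ that, on receiving the challenge $\ket{\psi_c}$, discards it and returns a fixed state $\ket{\phi}$ of the appropriate dimension (for concreteness, $\ket{\phi} = \ket{0}^{\otimes \kappa}$). Because $|\innerproduct{\psi_{\sigma_P}}{\phi}|^2 \geq 0$ holds unconditionally, the verifier's SWAP test then accepts with probability at least $\frac{1}{2}$, so the detection probability is at most $\frac{1}{2}$. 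I would also note that $\adv$ is trivially consistent with the experiment: any $Mem_{P'}$ at distance less than $\delta$ from $Mem_P$ suffices, since $P'$ never consults it.

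The main obstacle is conceptual rather than technical: one has to argue convincingly that no single-shot comparison strategy allowed to the verifier can do better than the SWAP test, which is why the theorem is carefully stated as an assumption. Given that assumption, the rest is just reading off the acceptance probability of the SWAP test and exhibiting the fixed-state attack above.
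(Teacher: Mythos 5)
Your proposal is correct and follows essentially the same route as the paper: both arguments rest on the one-sided error of the destructive SWAP test, whose acceptance probability $\tfrac{1}{2} + \tfrac{1}{2}|\langle\psi|\phi\rangle|^2$ is bounded below by $\tfrac{1}{2}$ for any response the prover sends, so no attack can be detected with probability above $\tfrac{1}{2}$. The only cosmetic differences are that you exhibit a concrete fixed-state adversary and spell out the no-cloning justification for a single test invocation, whereas the paper phrases the bound for an arbitrary attacker and leans explicitly on the correctness assumption (the verifier must accept outcome $0$) to force acceptance.
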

\begin{proof}
For the proof we assume a correct  version of \soteria{}, i.e. one where the prover can correctly decode the bitstring $r$ even at the cost of (at least) a quadratic number of messages. Recall that incorrect protocols can achieve arbitrary levels of security. We also assume the destructive version of the SWAP test, as the non-destructive version requires high depth circuits and extremely low noise levels to keep the states in super position across many runs. This approach is, therefore, not suitable for near term quantum computing. 

The SWAP test \cite{swaptest} takes two input states $\ket{\phi}, \ket{\psi}$ and outputs a Bernoulli random variable that is $0$ with probability $\frac{1}{2} + \frac{1}{2}\innerproduct{\phi}{\psi}^2$, where $\innerproduct{\phi}{\psi}$ is the inner product of the two states. 
If two states $\ket{\psi}$ and $\ket{\phi}$ are identical, the test will return $0$ with probability $1$. 
This means that, to satisfy correctness, the verifier ought to accept responses for which the SWAP tests output $0$. Now, being that the case, then any attacker $P'$ wins the security experiment with probability $\frac{1}{2} + \frac{1}{2}\innerproduct{P'(\ket{\psi_c})}{P(\ket{\psi_c})}$. In the worst case, i.e. when the two states are orthogonal, the attacker still has a $\frac{1}{2}$ chance of passing the protocol.

\end{proof}

To enhance the security of \soteria, one would need to run the SWAP test many times, which forces the prover to create and send many copies of $\ket{\sigma}$. In general, if the square of the inner product of two states differ by $\epsilon$, then $n$ tests will be required to detect an attack with probability $1-(1-(\frac{1}{2}-(\frac{1-\epsilon}{2})))^n$. 
In other words, the verifier needs $O(\frac{1}{\epsilon^2})$ 
responses (copies of the quantum state) from the prover to detect an attack that deviates from the correct response by an $\epsilon$ error. 
This requirement greatly increases the actual number of runs of the proposed algorithm needed to attest the prover's memory beyond what is reported in \cite{Soteria}.

\subsection{Analyzing the impact of the SWAP test on Memory Attestation}
 
    We note another counter-intuitive implication of the SWAP test to this application. In classical communication, the prover has incentive to increase the length of vector of memory bits $\sigma$ sent to the prover. In quantum communication, however, this is not necessarily the case. Consider the example $\ket{\psi}_{\sigma} = \frac{1}{\sqrt{3}} \ket{0} + 0 \ket{1} + \frac{1}{\sqrt{3}} \ket{2}  + \frac{1}{\sqrt{3}} \ket{3}+ 0\ket{4}$, whereas the true memory is encoded on the verifier as $\ket{\psi}_{\sigma^*} = 0 \ket{0} + 0 \ket{1} + \frac{1}{\sqrt{2}} \ket{2}  + \frac{1}{\sqrt{2}} \ket{3}+ 0 \ket{4} $, we have that $\braket{{\psi}_{\sigma} | {\psi}_{\sigma^*}} = 0.8166$, showing we have a $p=\frac{1}{2}-\frac{0.8166^2}{2}$ probability to detect the attack after the swap test. On the other hand if we expand our sampled bits by one, assuming we sample an uncorrupted bit, we obtain $\ket{\psi}_{\sigma} = \frac{1}{\sqrt{4}} \ket{0} + 0 \ket{1} + \frac{1}{\sqrt{4}} \ket{2}  + \frac{1}{\sqrt{4}} \ket{3}+ 0\ket{4}+ \frac{1}{\sqrt{4}}\ket{5}$ , $\ket{\psi}_{\sigma^*} = 0 \ket{0} + 0 \ket{1} + \frac{1}{\sqrt{3}} \ket{2}  + \frac{1}{\sqrt{3}} \ket{3}+ 0 \ket{4} + \frac{1}{\sqrt{3}} \ket{5}$, which yields  $\braket{{\psi}_{\sigma} | {\psi}_{\sigma^*}} = 0.8661$, showing we have a $p=\frac{1}{2}-\frac{0.8661^2}{2}$ probability to detect the attack after a single swap test. Paradoxically, even though we have tested more bits of memory we have a lower chance to detect the attack. 

    We can analyze this protocol to decide what the average probability of detecting the attack is, based on the length of bits sampled. 

    \begin{lemma}
    We show that increasing the number of bits sampled does not on average increase the probability of detecting an attack.  
    \end{lemma}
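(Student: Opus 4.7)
The plan is to model the expected per-SWAP-test detection probability as a function of the sample size $n$, and to show that it is non-increasing in $n$ when averaged over the random choice of sample. I would partition the full memory of size $M$ into four classes of sizes $A, B, C, D$ according to whether the pair of prover/verifier bits at a given position is $(1,1)$, $(1,0)$, $(0,1)$, or $(0,0)$, respectively. A uniformly random sample of $n$ positions yields class counts $(a_n, b_n, c_n, d_n)$ following a multivariate hypergeometric distribution, and, from the amplitude-encoded inner product formula used in the previous subsection, the per-test detection probability equals
\begin{equation*}
p_n = \tfrac{1}{2} - \tfrac{1}{2}\cdot\frac{a_n^2}{(a_n + b_n)(a_n + c_n)}.
\end{equation*}

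The central step is to prove $\mathbb{E}[p_{n+1}] \le \mathbb{E}[p_n]$ by coupling the two samples so that the $(n{+}1)$-sample extends the $n$-sample with one extra position drawn uniformly from the remaining pool. Conditioning on the first $n$ draws, I would carry out a case analysis on the class of the new bit: drawing from $D$ leaves the ratio unchanged; drawing from $A$ strictly raises it by an amount that I expect to expand as $[(b+c)a^2 + (2bc + b + c)a + bc]/[(a+b)(a+c)(a+b+1)(a+c+1)]$; drawing from $B$ or $C$ strictly lowers it by enlarging a denominator factor without touching the numerator. Weighting these four cases by the remaining class proportions, the unconditional change in $\mathbb{E}\!\left[a^2/((a+b)(a+c))\right]$ reduces to a polynomial inequality in $(a, b, c, A, B, C)$.

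The main obstacle is that this inequality is not unconditional: if the remaining memory were dominated by corruption (many $(1,0)$ and $(0,1)$ positions), the new sample would tend to lower the ratio and raise the detection probability, so more sampling would help. The clean statement therefore requires the natural qualifier that the corruption covers only a small fraction of memory, so that uncorrupted $(1,1)$-positions outnumber the $(1,0)$- and $(0,1)$-positions in the remaining pool. Under this qualifier the polynomial inequality factors cleanly, and the ratio increments in the $A$-case dominate, in expectation, those in the $B$- and $C$-cases. The concrete example immediately preceding the lemma already illustrates the mechanism: a single extra uncorrupted $1$-bit dilutes the relative weight of the corrupted position, pushing the inner product squared closer to $1$ and the detection probability closer to $0$.
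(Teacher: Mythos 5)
Your route is genuinely different from the paper's, and it contains a gap at its central step. The paper does not attempt to prove monotonicity in the sample size at all: it fixes a per-bit corruption probability $\epsilon$, writes both encoded states with the same normalization $1/\sqrt{m}$, notes that the expected number of matching positive positions is $S = m(1-\epsilon)$, and concludes $\mathbb{E}\bigl[|\braket{\psi_{\sigma}|\psi_{\sigma^*}}|^2\bigr] = |m(1-\epsilon)\cdot\tfrac{1}{m}|^2 = (1-\epsilon)^2$, which is \emph{independent} of $m$. The lemma is then read as ``constant in the sample size, hence not increasing,'' with no case analysis and no smallness assumption on the corruption. Your hypergeometric model and the exact overlap $a_n^2/((a_n+b_n)(a_n+c_n))$ are a more careful setup than the paper's, but you are trying to prove a strictly stronger and differently-qualified statement.

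The gap is that the decisive inequality --- that the $A$-case increment dominates the $B$- and $C$-case decrements in expectation --- is asserted (``the polynomial inequality factors cleanly'') rather than proved, and a first-order check shows it does not hold in the claimed form. Writing $\Delta_A$, $\Delta_B$, $\Delta_C$ for the three nonzero increments of the ratio $a^2/((a+b)(a+c))$, one finds for $a \gg b,c$ that $\Delta_A \approx (b+c)/a^2$ while $\Delta_B \approx 1/a$ and $\Delta_C \approx 1/a$, so the weighted comparison $A'\Delta_A$ versus $B'\Delta_B + C'\Delta_C$ reduces to $A'(b+c)$ versus $(B'+C')a$; for a representative sample ($a/n \approx A/M$, etc.) these two quantities are \emph{equal} to leading order. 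So precisely in your ``small corruption'' regime the positive and negative contributions cancel rather than one dominating, the sign of $\mathbb{E}[p_{n+1}]-\mathbb{E}[p_n]$ is governed by higher-order terms you have not computed, and the coupling argument as sketched does not close. This cancellation is, in effect, the content of the paper's proof (the expectation is constant), so to repair your argument you would either have to carry the expansion to the next order and establish a sign, or retreat to the weaker ``constant in expectation'' claim --- at which point the paper's direct computation is the shorter path. You would also need to handle the degenerate samples with $a_n+b_n=0$ or $a_n+c_n=0$, where the amplitude encoding and your ratio are undefined.
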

    \begin{proof}
    Let $b$ be the number of bits sampled from memory. Let $\sigma$ be the sampled memory on the prover and $\sigma_*$ be the sampled memory on the verifier. Let $m$ be the number of positive bits sampled, 
    which in expectation are $m=\bold{E}[\frac{b}{2}]$.
    Let $\epsilon$ be the probability that a bit is flipped, chosen by the attacker. 
    Let $\ket{\psi}_{\sigma} = \sum_{i=0, \sigma_i \neq 0}^{b}\frac{1}{\sqrt{m} } \ket{i} $ be the encoded sampled memory, and let $\ket{\psi}_{\sigma^*}=\sum_{i=0, \sigma_*(i) \neq 0}^{b}\frac{1}{\sqrt{m} } \ket{i}$, be the encoded true memory that the verifier is using to compare with the prover's memory.  
    Let $S$ be the number of bits where the memory matches in both the prover's and verifier's sample. We can compute this as $\bold{E}[S]=m(1-\epsilon)$. We note that in expectation both the prover and verifier will have the same number of positive bits in their sample. 

    We want to compute  $\bold{E}[|\braket{\psi_{\sigma}|\psi_{\sigma^*}}|^2]$. Where either  element of the vector is 0, that element does not contribute to the dot product. We then have $\bold{E}[|\braket{\psi_{\sigma}|\psi_{\sigma^*}}|^2]=|(S(\frac{1}{\sqrt{m}}\frac{1}{\sqrt{m}}))|^2=|m(1-\epsilon)(\frac{1}{m})|^2=|1-\epsilon|^2$. 

    Since this is independent of $m$, increasing the length of bits sampled does not increase the probability of detecting the attack. 
     \end{proof}

\section{On the limitations of quantum memory attestation}
\label{sec-quantum-memory}

In this section we discuss limitations of quantum memory that make the memory attestion problem challenging. These limitations are applicable to all challenge-response protocols aiming at attesting quantum memory, including the quantum version of \soteria{} displayed in Figure \ref{fig-soteria-quantum}. 

 \subsection{Overview of Quantum Memory}

 Quantum memory \cite{memory}, like regular memory, allows for the storage of words of memory in the system, each containing a qubit or collection of qubits. However, unlike classical memory, the system does not have to restrict itself to returning only a single word. A quantum memory system can return a super-position of these words. Given a memory index $\ket{\psi}_R$, the system returns a super-position of each memory word weighted by the amplitudes of $\ket{\psi}_R$. 
 
 Since we can return a super-position of quantum memory words, it might seem like it would be easier to design a remote memory attestation procedure than a classical memory attestation procedure, because one can sample from many words of memory simultaneously. However, our analysis has revealed three issues that in practice any quantum remote memory attestation system would have to overcome that make this a challenging problem.  
 

     
     
\subsection{Issue \#1: Quantum memory cannot be copied}

The no cloning theorem states that in general quantum states cannot be copied \cite{Wootters1982Single}. This is a serious problem for quantum remote memory attestation, because it implies that the state of the quantum memory cannot be copied and distributed to a remote verifier. The memory must moved from the prover to the verifier. This means that either the process is a destructive one from the prover's standpoint, where some amount of memory must be sacrificed to attest the remainder, or the verifier must test the memory in a non-destructive way, then return it to the prover. 

Keeping the memory in super-position, transmitting it to the verifier, then comparing it to another quantum state using the swap test, and then returning it to the prover, all without collapsing the super-position, is not a practical proposal for any near term quantum devices. This suggests that the more practical approach would be for the prover to sacrifice some number of qubits of its quantum memory for the attestation protocol. 
Even in that case, we show next that a large number of qubits will be required to attest the memory, in the precense of an attacker. 
    
 \subsection{Issue \#2: Quadratic communication from the prover required}

    Even if the prover is willing to sacrifice some subset of its memory in order to satisfy a remote attestation protocol, we find that if the attacker keeps the number of changes small,  the prover will need to transmit a larger number of qubits to the verifier. In fact, because the memory elements can be complex numbers, the attacker could theoretically make changes which are undetectable. Consider the case where the attacker makes the prover's memory to hold $\ket{\psi}_{\sigma} = \frac{1}{\sqrt{2}} \ket{0} +\frac{1}{\sqrt{2}} \ket{1}$, where the true memory is $\ket{\psi}_{\sigma} = \frac{i}{\sqrt{2}} \ket{0} +\frac{i}{\sqrt{2}} \ket{1}$. Here we have $|\braket{\psi_{\sigma} | \psi_{\sigma*}}|^2 = 1$, meaning the swap test will never detect a difference between these two vectors.

    The main issue is the comparison between the prover's state and the verifier's state. To detect a difference between the two, we must again resort to the swap test. Again, counter-intuitively our probability of detecting the attack does not increase by increasing by sampling more words of the devices memory. 

    \begin{lemma}
    Suppose that for each qubit of quantum memory the attacker makes a change of $|\epsilon| \leq 1$, with probability $p$. 
    We then show that the probability of detecting the attack is proportional to $O(|\epsilon|^2)$, regardless of the number of qubits sampled. 
\end{lemma}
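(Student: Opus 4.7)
The plan is to mirror the structure of the preceding lemma (the bit-flip case) but now for continuous perturbations. I would first set up notation: let $\ket{\psi_{\sigma^*}} = \sum_{i=1}^{n} \alpha_i \ket{i}$ be the verifier's amplitude-encoded state of the $n$ sampled memory qubits, with equal amplitudes $\alpha_i = \tfrac{1}{\sqrt{n}}$ (matching the amplitude-encoding convention used throughout \soteria{}). The attacker's model then gives a prover state of the form $\ket{\psi_\sigma} = \tfrac{1}{N}\sum_i (\alpha_i + \delta_i)\ket{i}$, where $\delta_i$ is a random variable with $|\delta_i| \le |\epsilon|$ with probability $p$ and $\delta_i = 0$ with probability $1-p$, and $N$ is the normalization factor making $\ket{\psi_\sigma}$ a unit vector.

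Next, I would compute $\langle \psi_{\sigma^*}\,|\,\psi_\sigma\rangle$ and then take the expectation of its squared modulus. The unnormalized inner product is $\sum_i \alpha_i(\alpha_i + \delta_i) = 1 + \sum_i \alpha_i \delta_i$, and the squared normalization factor is $N^2 = 1 + 2\sum_i \alpha_i \delta_i + \sum_i \delta_i^2$. Substituting $\alpha_i = 1/\sqrt n$, the cross-sum $\tfrac{1}{\sqrt n}\sum_i \delta_i$ has mean bounded independently of $n$ by the perturbation statistics, while $\sum_i \delta_i^2$ in expectation is $n\cdot p|\epsilon|^2/n = p|\epsilon|^2$ after the $1/n$ factor from normalization is applied; in other words, the $n$-dependence in the numerator and denominator cancels, leaving
\begin{equation*}
\mathbf{E}\!\left[|\langle \psi_{\sigma^*}\,|\,\psi_\sigma\rangle|^2\right] \;=\; 1 - \Theta(p|\epsilon|^2),
\end{equation*}
independent of $n$. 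Combined with the destructive SWAP test's detection probability $\tfrac{1}{2}\bigl(1 - |\langle \psi_{\sigma^*}\,|\,\psi_\sigma\rangle|^2\bigr)$, this yields a detection probability of $O(|\epsilon|^2)$ per run, as claimed.

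The main obstacle will be handling the normalization carefully. A naive first-order expansion in $\delta_i$ shows the perturbation term $\sum_i \alpha_i \delta_i$ is of order $\sqrt{p}\,|\epsilon|$ and would, if left unchecked, suggest an $n$-dependent accumulation. The subtlety is that the same random quantity appears in $N^2$ and almost exactly cancels against the numerator after taking $|\cdot|^2$, leaving only the second-order $\delta_i^2$ residue; I would therefore expand both numerator and denominator to second order in $\epsilon$ and verify the cancellation term by term. A secondary pitfall is that the bound must hold for worst-case $\delta_i$ (chosen adversarially within the magnitude constraint), not just zero-mean $\delta_i$; I would argue this by maximising over the sign pattern of $\delta_i$ and noting that the worst case still yields a second-order bound because the linear contribution cancels structurally, not probabilistically, once normalization is applied.
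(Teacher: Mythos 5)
Your proposal reaches the right conclusion but by a genuinely different route from the paper. You directly expand the expected squared inner product of the explicitly normalized states, writing $1-|\langle\psi_{\sigma^*}|\psi_\sigma\rangle|^2 = (T-S^2)/(1+2S+T)$ with $S=\sum_i\alpha_i\delta_i$ and $T=\sum_i\delta_i^2$, and argue that the linear term cancels structurally (indeed $S^2\le T$ by Cauchy--Schwarz, so the residue is genuinely second order). The paper instead never touches the normalization: it bounds the Euclidean norm of the difference of the two states by $p\epsilon$ (using that the index amplitudes $\ket{\psi}_{R_i}$ are square-summable to one), then converts that bound into a lower bound on $|\langle\psi_\sigma|\psi_{\sigma^*}\rangle|^2$ via $\|u-v\|^2=\|u\|^2+\|v\|^2-2\,\mathrm{Re}\langle u,v\rangle$ with both norms set to $1$, arriving at $\tfrac12-\tfrac12|\langle\psi_\sigma|\psi_{\sigma^*}\rangle|^2\le\tfrac12|p\epsilon|^2-\tfrac18|p\epsilon|^4$. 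Your route buys rigor on a point the paper glosses over --- the perturbed state $\ket{\rho_i}+\epsilon$ is not unit-norm, yet the paper assumes it is --- and it shows exactly where the first-order term disappears; the paper's route is shorter and avoids expanding a quotient of random quantities. One inconsistency to fix in yours: you posit $|\delta_i|\le|\epsilon|$ on the normalized amplitudes but then apply a $1/n$ factor to $\sum_i\delta_i^2$ to get $p|\epsilon|^2$; to match the lemma's model (a change of $\epsilon$ per memory word, subsequently weighted by index amplitudes of magnitude $\sim 1/\sqrt n$) the per-amplitude perturbation should be $\delta_i\sim\epsilon/\sqrt n$ from the start, otherwise $T$ grows with $n$ and the claimed $n$-independence fails. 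With that correction your expansion goes through as stated.
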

\begin{proof}

    Let $\ket{\psi}_R$ be a state used to index into the quantum memory. Suppose we have $m$ qubits of memory, $\ket{\rho_i}$, which with probability $p$ has been modified to be $\ket{\rho_i}+\epsilon$. 
    After indexing 
 on the prover we have $\ket{\psi}_{\sigma} = \ket{\psi}_R  \otimes (\ket{\rho_i}+\epsilon..)$, while on the verifier we have $\ket{\psi}_{\sigma*} = \ket{\psi}_R  \otimes (\ket{\rho_i}..)$. Thus state $\ket{\psi}_{\sigma} = [\ket{\psi}_{R_{1}} \otimes  (\ket{\rho_1}+\epsilon),\ket{\psi}_{R_{2}} \otimes (\ket{\rho_2}+\epsilon),   ...]$ and  $\ket{\psi}_{\sigma*} = [\ket{\psi}_{R_{1}} \otimes (\ket{\rho_1}),\ket{\psi}_{R_{2}} \otimes (\ket{\rho_2}),   ...]$. 
 
 We have 
 \begin{align*}
    &E[|\ket{\psi}_{\sigma}-\ket{\psi}_{\sigma*}|_2]=\\
      &|[\ket{\psi}_{R_{1}} \otimes (\ket{\rho_1}+\epsilon),\ket{\psi}_{R_{2}} \otimes (\ket{\rho_2}+\epsilon),   ...]-\\ 
      &[\ket{\psi}_{R_{1}}  \otimes(\ket{\rho_1}),\ket{\psi}_{R_{2}} \otimes (\ket{\rho_2}),   ...]|_2 =\\
      &|[\ket{\psi}_{R_{1}}p\epsilon,\ket{\psi}_{R_{2}}p\epsilon,..]|_2=\\
      &\sqrt{\sum_{i=0}^m |\ket{\psi}_{R_{i}}p\epsilon|^2} = p\epsilon
  \end{align*}

  This implies
    \begin{align*}
    & \sqrt{ |\ket{\psi}_{\sigma}|^2 + |\ket{\psi}_{\sigma*}|^2 - 2 |\braket{{\psi}_{\sigma} | {\psi}_{\sigma*}}|} \leq |\ket{\psi}_{\sigma}-\ket{\psi}_{\sigma*}|_2
			\\
      & \sqrt{ |\ket{\psi}_{\sigma}|^2 + |\ket{\psi}_{\sigma*}|^2 - 2 |\braket{{\psi}_{\sigma} | {\psi}_{\sigma*}}|} \leq  
      \\
       &|\ket{\psi}_R  \otimes (\ket{\rho_i}+\epsilon,...)-\ket{\psi}_R  \otimes (\ket{\rho_i},...)|_2
			\\
       & \sqrt{ |\ket{\psi}_{\sigma}|^2 + |\ket{\psi}_{\sigma*}|^2 - 2 |\braket{{\psi}_{\sigma} | {\psi}_{\sigma*}}|} \leq |p\epsilon|_2
			\\
		&   |1|^2 + |1|^2 - 2 |\braket{{\psi}_{\sigma} | {\psi}_{\sigma*}} | \leq 	|p\epsilon|^2_2\\
            & 2 |\braket{{\psi}_{\sigma} | {\psi}_{\sigma*}} |\geq 	2 - 	|p\epsilon|^2_2 \\
            & 4| \braket{{\psi}_{\sigma} | {\psi}_{\sigma*}}|^2 \geq  	4 - 4|p\epsilon|^2_2 +|p\epsilon|^4_2 \\
            & 4| \braket{{\psi}_{\sigma} | {\psi}_{\sigma*}}|^2 - 4 \geq  - 4|p\epsilon|^2_2 +|p\epsilon|^4_2 \\
         &   4 - 4| \braket{{\psi}_{\sigma} | {\psi}_{\sigma*}}|^2 \leq   4|p\epsilon|^2_2 -|p\epsilon|^4_2 \\  
         &   \frac{1}{2} - \frac{| \braket{{\psi}_{\sigma} | {\psi}_{\sigma*}}|^2}{2} \leq   \frac{|p\epsilon|^2_2}{2} -\frac{|p\epsilon|^4_2}{8} \\  
    \end{align*}

    \end{proof}

    We see then that if the attacker keeps $|\epsilon|$  small then the number of qubits needed to detect the attack with high probability may overwhelm the communication channel between the prover and the verifier.  We also observe that as in the case of classical memory increasing the number of words that the device indexes does not increase the chance to detect the attack, a very counter intuitive result .

     \subsection{Issue \#3: Impossibility of hashing quantum states}

    To avoid the problem of an attacker being able to force a large number of rounds needed to detect a change in the memory, we might hope to design a hash of quantum states (this is distinct from a quantum hash of classical states, algorithms for which do exist).  
    That is, we could attempt to design a quantum circuit $U$, such that given two qubits containing the true and the modified memory samples $\ket{\psi}_{\sigma}, \ket{\psi}_{\sigma*}$, with $\braket{{\psi}_{\sigma} | {\psi}_{\sigma*}}^2 \leq \epsilon$  we could apply $U$ to pry these two vectors apart, that is, $U\ket{\psi}_{\sigma*} =  \ket{\psi'}_{\sigma*},U \ket{\psi}_{\sigma} =  \ket{\psi'}_{\sigma} $ would have $\braket{{\psi'}_{\sigma} | {\psi'}_{\sigma*}}^2 \geq \epsilon$. Unfortunately, it is straightforward to show that this is impossible. 

    \begin{lemma}
    It is impossible to design a circuit $U$ where given two memory samples $\ket{\psi}_{\sigma*}, \ket{\psi}_{\sigma}$ such that $\braket{{\psi}_{\sigma} | {\psi}_{\sigma*}}^2 \leq \epsilon$, $U \ket{\psi}_{\sigma*} =  \ket{\psi'}_{\sigma*},U \ket{\psi}_{\sigma} =  \ket{\psi'}_{\sigma} $, $\braket{{\psi'}_{\sigma} | {\psi'}_{\sigma*}}^2 \geq \epsilon$
    \end{lemma}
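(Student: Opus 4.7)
My plan is to exploit the single fundamental fact that every quantum circuit acts on the state space as a unitary operator, and unitary operators are precisely the linear maps that preserve the Hilbert-space inner product. From this, the claimed ``hashing'' transformation is ruled out in one line.

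First, I would recall that any quantum circuit $U$ constructed from elementary gates satisfies $U^{\dagger}U = I$, hence for all states $\ket{\phi},\ket{\chi}$ we have $\braket{U\phi\,|\,U\chi} = \braket{\phi\,|\,U^{\dagger}U\,\chi} = \braket{\phi\,|\,\chi}$. Applying this identity to $\ket{\psi}_{\sigma}$ and $\ket{\psi}_{\sigma*}$ with $\ket{\psi'}_{\sigma}=U\ket{\psi}_{\sigma}$ and $\ket{\psi'}_{\sigma*}=U\ket{\psi}_{\sigma*}$ gives the equality
\begin{equation*}
\bigl|\braket{\psi'_{\sigma}\,|\,\psi'_{\sigma*}}\bigr|^{2} \;=\; \bigl|\braket{\psi_{\sigma}\,|\,\psi_{\sigma*}}\bigr|^{2}.
\end{equation*}
Thus the inner product between the two outputs is identical to the inner product between the two inputs, so no circuit can increase (nor decrease) the squared overlap of a pair of states. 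In the strict hashing reading of the statement, where one would want inputs close in overlap to be pushed far apart, this immediately yields a contradiction; in the literal reading of the lemma, the inequality $\braket{\psi'_{\sigma}\,|\,\psi'_{\sigma*}}^{2} \geq \epsilon$ would force $\braket{\psi_{\sigma}\,|\,\psi_{\sigma*}}^{2} \geq \epsilon$, contradicting the hypothesis that the latter is bounded above by $\epsilon$ (with strict separation, or taken together with the desired strict amplification).

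To make the argument robust, I would briefly note that the same conclusion holds for circuits that use ancillary qubits initialised in a fixed state $\ket{0}^{\otimes a}$: the embedding $\ket{\psi}\mapsto\ket{\psi}\otimes\ket{0}^{\otimes a}$ preserves inner products, and so does the subsequent unitary on the enlarged space; discarding (tracing out) ancillas can only make the outputs less distinguishable, not more, since partial trace is contractive with respect to the trace distance. Thus appending workspace qubits cannot help either. This rules out the ``quantum hash of quantum states'' that one would need to amplify the SWAP-test signal of Section~\ref{sec-quantum-memory}.

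The main obstacle here is not technical but expository: the statement of the lemma as written mixes the input and output inequalities in a way that is easy to misread, so a careful proof must first fix the intended interpretation (that a hash should separate nearby states, i.e.\ small input overlap should remain small and close input overlaps should become small) before invoking unitarity. Once that is clarified, the proof is an immediate consequence of $U^{\dagger}U=I$ and requires no further machinery.
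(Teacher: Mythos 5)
Your proof is correct and uses exactly the same argument as the paper: a quantum circuit is unitary, $U^{\dagger}U = I$, hence the inner product $\braket{\psi'_{\sigma}|\psi'_{\sigma*}}$ equals $\braket{\psi_{\sigma}|\psi_{\sigma*}}$ and no separation is possible. Your additional remark about ancilla qubits and the contractivity of the partial trace is a small strengthening the paper does not include, but the core of the argument is identical.
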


\begin{proof}
    Given that $U$ is a quantum circuit, $U$ is a unitary matrix, which follows from the well known linearity of quantum mechanics. This implies $U^{\dagger} U = I$. $\braket{{\psi'}_{\sigma} | {\psi'}_{\sigma*}}^2 = \braket{{\psi}_{\sigma} U^{\dagger}| U {\psi}_{\sigma*}}^2 = \braket{{\psi}_{\sigma} | {\psi}_{\sigma*}}^2 \leq \epsilon$.

 \end{proof}

 Given the issues we have outlined above, we believe a fundamentally new approach is needed for attesting quantum memory; one that has no counterpart in the classical setting. We leave this problem for future work. 

\section{Leveraging quantum effects to counteract network attacks}
\label{sec-superdense}
We argue that the most important contribution of quantum theory to the field of remote memory attestation is the possibility of authenticating the prover without assuming trusted hardware. 
In particular, it seems possible by means of quantum effects to ensure that the prover has been active during the execution of the protocol; an authentication property known by \emph{aliveness}. 
    In \cite{Soteria}, the prover is forced to be alive in the challenge phase by relying on a PUF implementation in the prover. As we just showed, though, their approach suffers from various flaws. Hence, this section is dedicated to introducing a software-based remote attestation protocol that forces the prover to be alive during both the challenge phase and the response phase with minimal communication overhead. 

    \subsection{The protocol description}
    We present a protocol that conservatively extends the design in \cite{LaeuchliT24}. This allows us to illustrate the use of aliveness during the challenge phase without sacrificing security nor having to redo the security analysis given in \cite{LaeuchliT24}. 
    That is to say, we improve the protocol security by ensuring the prover is alive during the challenge phase without modifying existing security features of the protocol. The improvement lies specifically in the resistant of the protocol to proxy attacks, whereby the prover outsource the attestation task to a third-party device.  In terms of communication, our conservative extension requires the transmission of a few classical bits via superdense coding. We, however, retain roughly the same communication complexity by transmmitting only a few classical via superdense coding. Concretely, we transmit $2K$ classical bits where $K$ is a security parameter.   

    The protocol, depicted in Figure \ref{fig-protocol2}, consists of a setup and initial proximity-checking phase, multiple challenge-response phases, and a verification phase. We describe each of these phases next. 


	\noindent \emph{Setup and initial proximity-checking phase.} 
	The protocol starts with the verifier sending two nonces, $n^1_0$ and 
	$m^1_0$, 
	to 
	the prover. The former will be used in the next phase to feed a pseudo-random function $\gen$, 
	which 
	will create random memory addresses, the latter to calculate a checksum 
	$\chk$ 
	of the 
	prover's memory. 
    An important feature of this phase is that the prover quickly teleports $m_0^1$ back to the verifier. This allows the verifier to check whether the prover is close. 
	
	\noindent \emph{Challenge-response phase.}
    The challenge-response phase is executed $\kappa$ times, where $K$ is a security parameter. At the $i$th loop cycle, 
    the verifier picks two random bits $c_ic'_i$, prepares the qubit $\ket{\psi_i}$ to transmit via superdense coding, stores the current time $t_i$, and transmits $\ket{\psi_i}$. 
    Upon reception, the prover retrieves $c_ic'_i$ and proceeds to attest its memory by calculating a checksum of $N$ randomly chosen memory blocks. 
    
    The calculation of the checksum is performed in $N$ steps, where $N$ is a security parameter. At the $j$th step, 
    the function $\gen(n^i_{j-1} || c_i c'_i)$ is used to generate a pair of nonces $(n^i_j, 
	a^i_j)$, $s_j^i$ is used to store the 
	memory 
	word at the address $a^i_j$, and $m^i_j$ is used to store the the checksum of $m^{i-1}_j, s_j^i$ and 
	$r_{i-1}$. These checksums are encoded into a series of angles $\theta$,via normalisation into $[0, \pi]$. The prover then teleports the 
    qubit  $\ket{\Psi'} = sin(\frac{\theta'}{2}) 
    \ket{0} + 
    cos(\frac{\theta'}{2}) \ket{1}$. The teleportation is carried out by 
    entangling $\ket{\Psi}$ with the qubit to be teleported, then measuring in 
    the Bell basis.

	
	\noindent \emph{Verification phase.} The verifier checks that, for every round $i \in \{1, \ldots, N\}$, the checksum $m^i_N$ is correct 
	and the round-trip-time $t'_{i} - t_i$ is below a time threshold $\timethreshold$. 
    To verify that the checksum is correct, the verifier
    rotates the result based on the rotation bits ($r$) provided by the 
    prover in its response and angle $\theta$. Let $\ket{\Psi}$ be the 
    resulting qubit. The 
    verifier checks that $\ket{\Psi} = \ket{\Psi'}$ by performing an inverse 
    rotation $\ket{\Psi}$ 
    with 
    angle $\theta$. After measuring the verifier should obtain
    $\ket{1}$. If the verifier measures $\ket{0}$, then the memory has been compromised.

	\begin{figure}
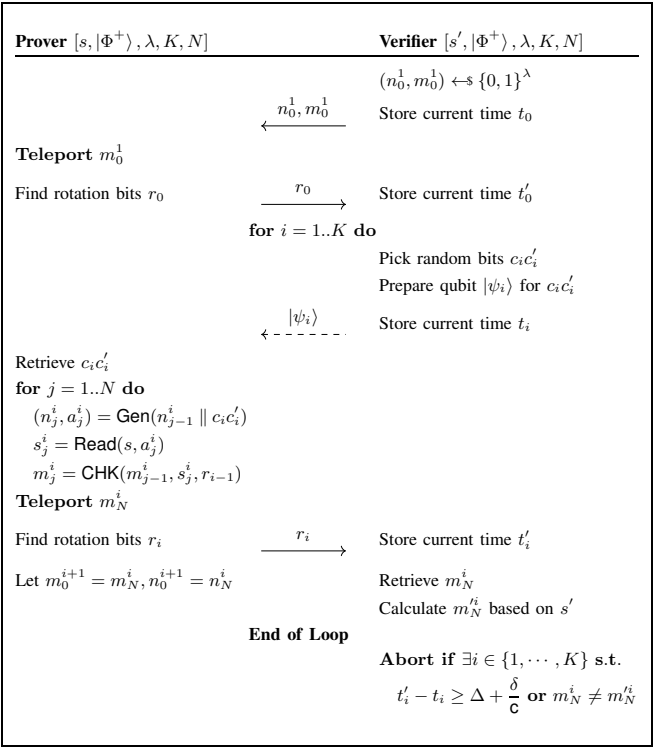

		\begin{center}
  \scalebox{0.75}{
			\fbox{
				\pseudocode[colspace=0cm]{%
					\< \< 
					\\
					\textbf{Prover $[s, \ket{\Phi^+}, \secparam, K, N]$} \< \< 
					\textbf{Verifier 
						$[s', \ket{\Phi^+}, \secparam, K, N]$}  
					\\[0.1\baselineskip][\hline]
					\<\< \\[-0.5\baselineskip]
					\< \< (n^1_0, m^1_0) \sample \bin^{\secparam} \\
					\<\< \\[-4ex]
					\< \sendmessageleft*[1.5cm]{n^1_0, m^1_0} \< \text{Store 
					current time $t_0$} \\
					\highlightkeyword{Teleport}	m^1_0 \<\<  
					\\
					\text{Find rotation bits } r_0 \< 
					\sendmessageright*[1.5cm]{r_0} \< 
					\text{Store 
						current time $t'_0$} \\
					\< \pcfor i = 1 .. K \pcdo \< \\
					\<\< \text{Pick random bits } c_ic'_{i} \\
					\<\< \text{Prepare qubit $\ket{\psi_{i}}$ for $c_ic'_{i}$} \\
					\< \sendmessageleft*[1.5cm, style ={dashed}]{\ket{\psi_{i}}} \< \text{Store 
					current time $t_i$} \\					
                    \text{Retrieve $c_ic'_{i}$} \<\< \\
					\pcfor j = 1 .. N \pcdo \< \< \\
                    \t (n^i_j, a^i_j) = \gen(n^i_{j-1} \mathbin\Vert c_ic'_{i}) \<\< \\
					\t s^i_j = \readf(s, a^i_j) \<\< \\
					\t m^i_j = \chk(m^i_{j-1}, s^i_j, r_{i-1}) 
					\< 
					\< \\
					\highlightkeyword{Teleport}	m^i_N \<\<  
					\\
					\text{Find rotation bits } r_{i} \< 
					\sendmessageright*[1.5cm]{r_{i}} 
					\< 
					\text{Store 
						current time $t'_i$} \\
					\text{Let } m^{i+1}_0 = m^{i}_N, n^{i+1}_0 = n^{i}_N \< \< 
					\text{Retrieve $m^i_N$} \\
					\< \< 
					\text{Calculate $m'^i_N$ based on $s'$} \\
					\< 
					\textbf{End of Loop} \< \\
					\<\< \highlightkeyword{Abort} \highlightkeyword{if} \exists i \in \{1, \cdots, K\} \ \highlightkeyword{ s.t.}
					\\
					\<\< \t t'_i - t_i \geq  
					\timethreshold + \frac{\maxdist}{\speed} \ \highlightkeyword{ or } m^i_N \neq m'^i_N \\
				} 
			}
   }
			\caption{A software-based memory attestation protocol that employs superdense coding and teleportation to ensure the prover is alive during both the challenge and the response phase. 
				\label{fig-protocol2}}	
		\end{center}
	\end{figure}
	
    It is worth noting that the round-trip-time $t'_0 - t_0$ is lower than the other round-trip-times $t'_i - t_i$ with $i > 1$, because the first round-trip-time consists of a prover reflecting back the challenge sent by the 
	verifier while the others include the time required to compute a partial memory checksum. Therefore, in practice, the time threshold for $t'_0 - t_0$ should be based purely on the speed of the communication channel, like in classical distance bounding. The time threshold for $t'_i - t_i$ with $i > 1$, however, will need to account for the computational time taken by attestation task.

\subsection{A note on the security of the protocol}

Because the introduced protocol is a conservative extension of the protocol in \cite{LaeuchliT24}, we claim it preserves the security bounds of the latter. In practice, however, the introduced protocol is more effective at counteracting proxy attacks. The reason being that it forces the prover to interact with the network attacker during both the challenge phase and the response phase, while the protocol in \cite{LaeuchliT24} \emph{only} forces the prover to participate in the response phase. The additional prover-attacker interaction increases the round-trip-time measured by the verifier, thereby forcing the attacker to be closer to the verifier.

\subsection{A note on the communication cost of our algorithm}

 While our method avoids the cost of quadratic communication between the Verifier and the Prover at the start of the algorithm, we still have a quadratic amount of communication between the Prover and Verifier during the verification phase.  However, we do manage a constant improvement over the swap test required by the algorithms of \cite{Soteria}. The reason being that the probability of success for this method is $|\braket{{\psi}_{\sigma}| {\psi}_{\sigma*}}|^2$, compared to $\frac{1}{2}-\frac{|\braket{{\psi}_{\sigma}| {\psi}_{\sigma*}}|^2}{2}$.  The improved efficiency is from the fact that the algorithm makes use of knowledge of what state the qubit is meant to be in, knowledge that is not exploited by the swap test. 

        \begin{figure}
  \caption{Comparison of the two different methods of detecting an attack }
  \centering
    \includegraphics[width=0.5\textwidth]{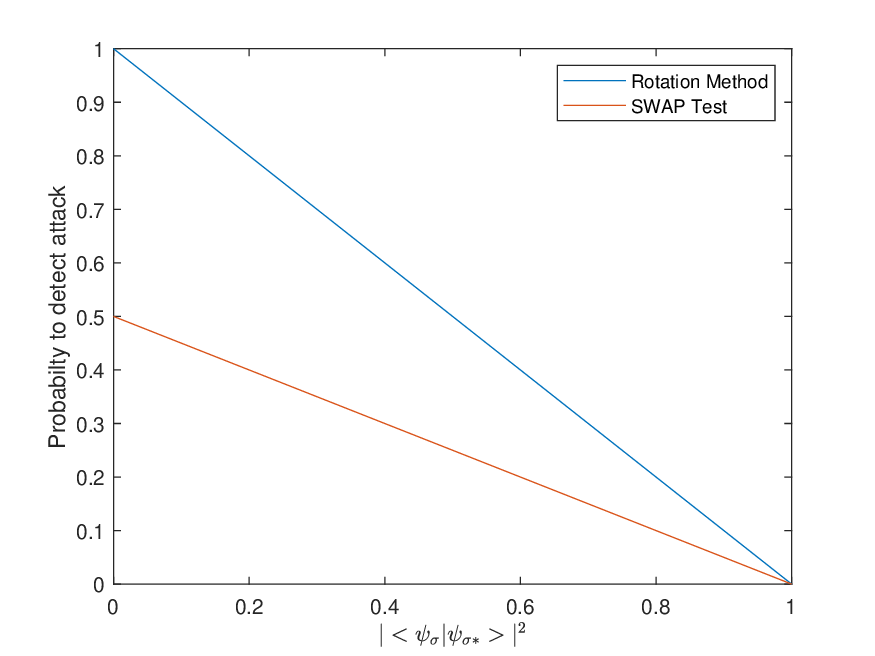}
\end{figure}

\section{Conclusions} 
 In this paper we have presented a method for remote memory attestation using quantum effects. This method combines the strengths of the proposals previously made in \cite{Soteria} and \cite{LaeuchliT24}. To achieve this we have analyzed the methods proposed in \cite{Soteria} extensively, and identified previously unknown and counter-intuitive limitations of these types of approaches, which we believe will be useful for anyone designing these types of protocols. We have used these observations to design a new protocol that makes it harder for the prover to outsource the attestation task to a far-away colluder, as the protocol forces the prover to be involve during both: the reception of the verifier's challenge and the sending of the response. 
Additionally, we remove the need for special purpose qPUFs, while retaining the strength they provide in this application area. This yields a device which is cheaper to produce and more easily implementable with near-term quantum computers which have a low circuit depth.

\bibliography{bib}
\end{document}